\documentclass[12pt,twoside,reqno]{amsart}
\linespread{1.05}
\usepackage[colorlinks=true,citecolor=blue]{hyperref}
\usepackage{mathptmx, amsmath, amssymb, amsfonts, amsthm, mathptmx, enumerate, color,mathrsfs}
\setlength{\textheight}{23cm}
\setlength{\textwidth}{16cm}
\setlength{\oddsidemargin}{0cm}
\setlength{\evensidemargin}{0cm}
\setlength{\topmargin}{0cm}

\usepackage{graphicx}

\usepackage{multirow}
\usepackage{epstopdf}
\usepackage{multicol}
\usepackage{algorithm}
\usepackage{algorithmic}
\usepackage{epstopdf}

\providecommand{\R}{\mathbb{R}}
\providecommand{\N}{\mathbb{N}}

\providecommand{\p}{\mathbb{P}}
\providecommand{\Q}{\mathbb{Q}}

\providecommand{\A}{\mathcal{A}}

\providecommand{\LL}{\mathcal{L}}

\providecommand{\M}{\mathcal{M}}

\providecommand{\HH}{\mathcal{H}}
\providecommand{\Norm}{\mathcal{N}}

\providecommand{\E}{\mathbb{E}}

\providecommand{\half}{\ensuremath{\frac{1}{2}}}

\newcommand*{\defeq}{\mathrel{\vcenter{\baselineskip0.5ex \lineskiplimit0pt
                     \hbox{\scriptsize.}\hbox{\scriptsize.}}}%
                     =}

\newtheorem{theorem}{Theorem}[section]
\newtheorem{lemma}[theorem]{Lemma}

\theoremstyle{definition}

\newtheorem{remark}[theorem]{Remark}
\numberwithin{equation}{section}

\begin{document}
\setcounter{page}{1}

\vspace*{2.0cm}
\title[S Shaped Utility Maximisation with a Random Reference Point]
{Deep Learning Methods for S Shaped Utility Maximisation with a Random Reference Point}
\author[A. Davey, H. Zheng]{ Ashley Davey$^{1}$, Harry Zheng$^{2, *}$}
\maketitle
\vspace*{-0.6cm}

\begin{center}
{\footnotesize

$^1$Department of Mathematics, Imperial College, London SW7 2BZ,\, UK\\
$^2$Department of Mathematics, Imperial College, London SW7 2BZ,\, UK\\
This paper is dedicated to Professor Richard B. Vinter on the occasion of his 75th birthday. 

}\end{center}

\vskip 4mm {\footnotesize \noindent {\bf Abstract.}
We consider the portfolio optimisation problem where the terminal function is an S-shaped utility applied at the difference between the wealth and a random benchmark process. We develop several numerical methods for solving the problem using deep learning and duality methods. We use  deep learning methods to solve the associated Hamilton-Jacobi-Bellman  equation for both the primal and dual problems, and the adjoint equation arising from the stochastic maximum principle. We compare the solution of this non-concave problem to that of concavified utility,  a random function depending on the benchmark,  in both complete and incomplete markets. We give some numerical results for power and log utilities to show the accuracy of the suggested algorithms.

 \noindent {\bf Keywords.}
S-shaped utility maximisation, random reference point, incomplete market, concavification, dual control, deep neural network.

 \noindent {\bf 2020 Mathematics Subject Classification.}
93E20, 91G80, 90C46, 49M29}

\renewcommand{\thefootnote}{}
\footnotetext{ $^*$Corresponding author.
\par
E-mail address: ashley.davey18@imperial.ac.uk (A. Davey), h.zheng@imperial.ac.uk (H.Zheng).
}

\section{Introduction}

In this paper we apply the deep learning method to solve utility maximisation problem with a  benchmark that is  a non-negative random variable. The agent wishes to maximise an S-shaped utility, a function which is concave on $(0,\infty)$ and convex on $(\infty, 0)$, applied at the difference between their terminal wealth and the benchmark. This problem is non-standard and difficult to solve due to the non-concavity of the problem. We suggest several numerical methods for solving the problem utilising the concave envelope of the utility and the dual control method.

 The random benchmark may represent an asset or portfolio of assets (that could be partly non-tradable) whose payoff we try to replicate. Unlike the deterministic case, the comparison to a random variable allows us to set a reference for utility maximisation that accounts for natural fluctuations in the market. In particular, the benchmark may be proportional to the stock market index. This makes sense intuitively, as the targets for an investor may  increase or decrease over time when the market is favourable or challenging. The benchmark may also be affected by factors which cannot be traded, leading to incompleteness of the problem.

The treatment of utility maximisation problems is standard in the literature, see \cite{17} for exposition of the dynamic programming principle (DPP) and 
the stochastic maximum principle (SMP), and the corresponding Hamilton-Jacobi-Bellman (HJB) equation and the adjoint backward stochastic differential equation (BSDE). The duality approach has been regularly applied to this problem in both complete and incomplete markets, see \cite{4, 13, 14}. The control problem we wish to solve can be seen as an extension of regular utility maximisation problem, where the utility function is random, and also non-concave in the state variable. 

Having randomness in the dynamics and gains functions leads to a SMP involving Malliavin derivatives \cite{16}, but as long as the random utility has sufficient integrability, the adjoint BSDE admits a solution, we may apply the SMP in the same way as the standard approach of, say, \cite{21}. The duality approach has been considered for a random utility that arises from a random endowment at terminal time, see \cite{5,10}.

The introduction of an S-shaped structure allows us to compare two random variables. However, it is a tricky function to deal with as it is non-concave, and often this utility will have infinite derivative when the variables are equal. Non-concave and potentially discontinuous utilities have been treated in the literature, see \cite{1,2,19}. Maximisation of an S-shaped utility compared to a benchmark has been considered in \cite{8, 20}, but unlike our setting the benchmark is deterministic.

A key notion for studying non-concave problems is the so called concavification principle. This principle states that we can instead maximise the concave envelope of the utility function and get the same solution as if we solved the non-concave problem directly. The concavified problem is solved with  standard approaches. In the deterministic benchmark, complete market and no control constraint case, the concavification principle is known to hold, see \cite{19}. The value function is a smooth solution to the HJB equation in the Markovian setting, see \cite{1}, and is the conjugate function of the dual value function that has a simple integral representation and may have a closed-form   expression for some utility functions, see \cite{8}. The optimal control drives the wealth to points where the utility matches its concave envelope. In our case, that is  the set $\{0\} \cup [Z, \infty)$ for some $Z > 0$ greater than the benchmark. If the wealth is in the range $(0, Z)$ before the terminal time, the optimal control involves considerable risk and creates two possibilities: the terminal wealth either beats the benchmark by a certain amount or becomes 0. However, if in some cases this optimal policy cannot be attained due to control constraints or incompleteness, the value function is not concave and the concavification principle does not apply, see \cite{6}.

We use the deep learning method to solve the primal, concavified and dual control problems arising in our setting. A key benefit of deep learning methods is the scalability with dimension \cite{9}. We introduce various PDEs, which can be solved numerically using a Physics Informed Neural Network (PINN), introduced in \cite{18}. The solution function is taken to be a neural network in state and time, and the PDE and terminal condition are evaluated and used as loss functions to find the optimal neural network. To solve the adjoint BSDE associated to the SMP, we use a combination of the deep solver of \cite{11}, the deep BSDE method in \cite{12} and the deep SMP method in \cite{7}. In this method the driving process is taken as a neural network and the  wealth process as input. The adjoint equation is simulated using Monte Carlo (in the forward direction), with the driver process determined using the SMP. The terminal condition is used as a loss function for the initial value of the adjoint process. We also use a neural network for the control process, as in \cite{7}, and optimise the control process  using  the gains function as a loss function, accelerated with the terminal condition for the adjoint equation. 

The remainder of the paper is outlined as follows. Section \ref{sec_2} formulates the problem and defines the concavified problem. Section \ref{sec_3} considers the case with a scalable power utility and reduces the dimension of the problem, and the PDE method is used to solve the reduced problem and its  concavified and dual problems. Section \ref{sec_4} describes the PDE and SMP methods for problems with general utilities. Section \ref{sec_5} presents numerical results comparing all the methods for a range of problems, and Section \ref{sec_6} concludes.

\section{Problem Formulation} \label{sec_2}

Assume  $(\Omega, \mathcal{F},   \{\mathcal{F}_t\},  \p)$ is a filtered
probability space with $\{\mathcal{F}_t\}$ the natural filtration
generated by two standard Brownian motions $W, W^R$ with correlation $\rho \in [-1, 1]$,  augmented with all $\mathbb{P}$-null sets.  Assume  the investment period is $[0,T]$ and
the risk-free asset price $S_0$ and the risky asset price $S$ satisfy the following stochastic differential equations (SDEs), for $t \in [0,T]$,
\begin{align*}
dS^0_t & = \alpha S^0_t dt, \quad 
dS_t  = \mu S_tdt + \sigma S_t dW_t,
\end{align*}
where $\alpha, \mu, \sigma$ are positive constant, representing the riskless interest rate, stock growth and  volatility rates. 
An investor invests a proportion $\pi \in \R$ of the wealth $X$ in $S$ and the remainder in $S^0$. The wealth process $X$ satisfies, for $t \in [0,T]$,
$$ dX_t  =  \left(\alpha X_t + \pi_t X_t \theta \sigma \right) dt + X_t \pi_t \sigma dW_t
$$
with initial wealth $X_0=x_0> 0$, where $\theta =(\mu-\alpha)/\sigma$ is the market price of risk and $\pi\in \HH^2_0(\R)$, the set of all square integrable and progressively measurable processes on $[0,T]$ taking values in $\R$. The investor wants to maximise an S-shaped utility of the excess of the wealth $X$ compared to some positive benchmark process $R$ satisfying,  for $t \in [0,T]$,
$$dR_t  =  a R_t dt + b R_t dW^R_t
$$
with initial value $R_0=r_0>0$, where $a,b$ are nonnegative constants. 
The utility function $U \colon \R \to \R$ is given by, for $z \in \R$,
\begin{align*}
U(z) & \defeq 
\begin{cases}
U_1(z) & z \geq 0, \\
-U_2(-z) & z < 0,
\end{cases} 
\end{align*} 
where $U_1, U_2 \colon [0, \infty) \to \R$ are strictly increasing, strictly concave, continuously differentiable with $U_1(0) = U_2(0) = 0$. The value function is defined by, for $(t, x, r) \in [0,T] \times (0, \infty)^2$,
\begin{align}\label{eq_value}
v(t, x, r) &  \defeq \sup_{\pi \in \HH^2_t(\R)}\E\left[ U(X_T - R_T)|X_t = x, R_t = r \right].
\end{align}

The combination of two utilities leads to an S-shaped function $U$. To solve this problem we use the concave envelope and Fenchel Legendre transforms, applied to $x$ variable, that is, the concave envelope is the smallest concave function (in $x$)  $\bar{U}(x, r)$ that is greater than or equal to  $ U(x - r)$. By \cite{3} there exists a function $ \eta \colon [0, \infty) \to [0, \infty)$ such that $\eta(r) \geq r$ for all $r \in [0,\infty)$ and
\begin{align}
U_1(\eta(r) - r) + U_2(r) - \eta(r) U_1'(\eta(r) - r) = 0,
\label{eq_tangent}
\end{align}
and the concave envelope is given by
\begin{align} \label{eq_def_u_bar}
\bar{U}(x, r) & = 
\begin{cases}
-U_2(r) + x U_1'(\eta(r) - r) & x <  \eta(r), \\
U_1(x - r) & x \geq  \eta(r) .
\end{cases}
\end{align}
Note $\bar U(\cdot, r)$ is concave, increasing, and $C^1$ for fixed $r\geq 0$.
Define the value function of the concavified problem by, for   $(t, x, r) \in [0,T] \times (0, \infty)^2$,
\begin{align}\label{eq_value_conc}
\bar{v}(t, x, r) &  = \sup_{\pi \in \HH^2_t(\R)}\E\left[ \bar{U}(X_T, R_T)|X_t = x, R_t = r \right].
\end{align}

The Fenchel Legendre transform of $U$ (in $x$) is given by, for $(y, r) \in (0,\infty)^2$,
\[\tilde{U}(y, r) \defeq \sup_{x > 0} \left\{ U(x - r) - xy \right\} = \sup_{x > 0} \left\{ \bar{U}(x, r) - xy \right\}.\]
Substituting (\ref{eq_def_u_bar}) into the above definition, we have 
\begin{align} \label{eq_dual_utility}
\tilde{U}(y, r) & = \begin{cases}
\tilde{U}_1(y) - ry & 0 < y \leq  U_1'( \eta(r) - r), \\
-U_2(r) & y > U_1'( \eta(r) - r),
\end{cases}
\end{align}
where $\tilde{U}_1$ is the Fenchel Legendre transform of $U_1$, and $ \eta$ is determined by (\ref{eq_tangent}).

It is clear that function $\bar{v}$ is concave in $x$,  $\bar{v} \geq v$ for all for $(x, r) \in (0,\infty)^2$, and   $v(T, x, r) \ne \bar{v}(T, x, r)$ on the non-concavity region $ C \defeq \{(x, r) \in [0,\infty)^2 \colon U(x - r) \neq \bar{U}(x, r)\}$. However, under some circumstances equality can hold at terminal time. This is due to the optimal state process $X^*$ satisfying $\p((X^*_T, R_T) \in C) = 0$. The optimal control `steers' the state process either to a region where $U$ is concave or to $0$, points at which $\bar U$ equals $U$. 

\begin{theorem}\label{thm_concavification} \cite[Theorem 1]{3}.  Suppose the market is complete ($|\rho| = 1$) and the function
\[\psi(\lambda) = -\E\left[\zeta_T \partial_y \tilde{U}(\lambda \zeta_T, R_T)\right] \]
satisfies $\psi(\lambda) < \infty$ for all $\lambda > 0$, where
$\zeta_t = e^{-(r + \half \theta^2) T - \theta W_T}$. 
Then there exists $\lambda^* > 0$ such that $\psi(\lambda^*) = x_0$ and
\[X_T^* = - \partial_y \tilde{U}\left(\lambda^* \zeta_T, R_T\right).\]
In particular, we have $\p((X^*_T, R_T) \in C) = 0$. 
\end{theorem}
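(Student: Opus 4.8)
The plan is to run the martingale (convex duality) argument, which is available precisely because completeness ($|\rho|=1$) forces $W^R$ to be a (signed) multiple of $W$, so that $R_T$ is $\mathcal{F}_T^W$-measurable and every payoff is hedgeable by the single risky asset. First I would record the budget characterisation: an $\mathcal{F}_T$-measurable $X_T\geq 0$ is attainable by some $\pi\in\HH^2_t(\R)$ if and only if $\E[\zeta_T X_T]=x_0$, with $\zeta_T$ the state price density of the complete market. The concavified value \eqref{eq_value_conc} then collapses to the static problem $\sup\{\E[\bar U(X_T,R_T)]:\E[\zeta_T X_T]=x_0\}$, and for any feasible $X_T$ and any $\lambda>0$ the Lagrangian splitting
\begin{align*}
\E[\bar U(X_T,R_T)]=\E[\bar U(X_T,R_T)-\lambda\zeta_T X_T]+\lambda x_0\leq \E[\tilde U(\lambda\zeta_T,R_T)]+\lambda x_0
\end{align*}
holds, directly from the definition of $\tilde U$ as the pointwise conjugate of $\bar U$ in $x$.

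Next I would identify the pointwise maximiser of $x\mapsto \bar U(x,R_T)-\lambda\zeta_T x$ with $-\partial_y\tilde U(\lambda\zeta_T,R_T)$ and read off its shape from \eqref{eq_dual_utility}: on $\{0<\lambda\zeta_T\leq U_1'(\eta(R_T)-R_T)\}$ one gets $-\partial_y\tilde U(\lambda\zeta_T,R_T)=(U_1')^{-1}(\lambda\zeta_T)+R_T\geq\eta(R_T)$, while on $\{\lambda\zeta_T>U_1'(\eta(R_T)-R_T)\}$ it equals $0$. Hence the candidate $X_T^*=-\partial_y\tilde U(\lambda^*\zeta_T,R_T)$ takes values only in $\{0\}\cup[\eta(R_T),\infty)$, which by \eqref{eq_def_u_bar} is exactly the set where $\bar U(\cdot,R_T)=U(\cdot-R_T)$; this gives $\p((X_T^*,R_T)\in C)=0$ for free and upgrades optimality for $\bar v$ to optimality for $v$. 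Feeding $X_T^*$ back in attains equality in the Lagrangian bound, so it is optimal for the concavified problem as soon as the budget constraint is met.

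It then remains to produce $\lambda^*$ with $\psi(\lambda^*)=x_0$. I would study $\psi(\lambda)=-\E[\zeta_T\,\partial_y\tilde U(\lambda\zeta_T,R_T)]=\E[\zeta_T X_T^*(\lambda)]$: it is nonnegative, and since $\tilde U(\cdot,r)$ is convex in $y$ the map $-\partial_y\tilde U(\cdot,r)$ is nonincreasing, so $\psi$ is nonincreasing in $\lambda$. The standing hypothesis $\psi(\lambda)<\infty$ together with monotone/dominated convergence yields continuity of $\psi$. The boundary behaviour comes from the explicit form: as $\lambda\to 0^+$, the Inada feature $U_1'(\infty)=0$ (valid for the power and log utilities in view) makes $(U_1')^{-1}(\lambda\zeta_T)\to\infty$, forcing $\psi(\lambda)\to\infty$; as $\lambda\to\infty$, the event $\{\lambda\zeta_T>U_1'(\eta(R_T)-R_T)\}$ exhausts $\Omega$ and $X_T^*\to 0$, so $\psi(\lambda)\to 0$. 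The intermediate value theorem then delivers $\lambda^*>0$ with $\psi(\lambda^*)=x_0$ for every $x_0\in(0,\infty)$.

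The hard part will be the kink of $\tilde U(\cdot,r)$ at $y=U_1'(\eta(r)-r)$, where the tangency \eqref{eq_tangent} makes the two branches of \eqref{eq_dual_utility} agree at the common value $-U_2(r)$ and the argmax degenerates to the whole interval $[0,\eta(r)]$ rather than a single point — this is exactly the ``beat the benchmark or lose everything'' gamble described in the introduction. For $X_T^*=-\partial_y\tilde U(\lambda^*\zeta_T,R_T)$ to be well defined I must show the critical event $\{\lambda^*\zeta_T=U_1'(\eta(R_T)-R_T)\}$ is null; this relies on $(\zeta_T,R_T)$ being a smooth function of the Gaussian pair $(W_T,W^R_T)$ and hence having a non-atomic law, so that the preimage of this lower-dimensional locus is negligible. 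A secondary technical point is to justify that $X_T^*$, which vanishes on a set of positive probability, is genuinely attainable within (the closure of) $\HH^2_t(\R)$ despite the strictly positive multiplicative wealth dynamics, and that the integrability hypothesis is exactly what underwrites the continuity of $\psi$ used above.
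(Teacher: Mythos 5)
The paper offers no proof of Theorem \ref{thm_concavification}: it is quoted from \cite[Theorem 1]{3} (see the remark immediately after it), and the argument in that reference is precisely the martingale/convex-duality argument you outline. So in structure your proposal matches the intended proof: complete-market budget characterisation, conjugate (Lagrangian) bound, pointwise maximiser $-\partial_y \tilde{U}(\lambda\zeta_T, R_T)$, the branch computation from \eqref{eq_dual_utility} showing this maximiser takes values in $\{0\}\cup[\eta(R_T),\infty)$ (hence $\p((X^*_T,R_T)\in C)=0$), and an intermediate-value argument producing $\lambda^*$. That part is correct.

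Two of your ``technical points'' are, however, not closed, and the first is closed with an argument that fails. (a) The null-set step: you justify $\p\bigl(\lambda\zeta_T = U_1'(\eta(R_T)-R_T)\bigr)=0$ by saying $(\zeta_T,R_T)$ has a non-atomic law so the preimage of a lower-dimensional locus is negligible. This dimension count has no force here: under $|\rho|=1$ both $\zeta_T$ and $R_T$ are deterministic functions of the single Gaussian $W_T$, so the law of the pair is itself carried by a one-dimensional curve in $(0,\infty)^2$, i.e.\ the support is exactly as ``lower-dimensional'' as the critical locus. The correct argument is one-dimensional: the event is $\{w\colon \lambda\zeta_T(w) = U_1'(\eta(R_T(w))-R_T(w))\}$, and for the power utility both sides are exponentials, $\mathrm{const}\cdot e^{-\theta w}$ versus $\mathrm{const}\cdot e^{\rho b (p-1) w}$, which coincide on a non-null set if and only if $\bar{\theta} = \theta + \rho b(p-1) = 0$ and the constants match. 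So nullity --- which you also need for the continuity of $\psi$, since $-\partial_y\tilde{U}(\cdot,r)$ jumps from $\eta(r)$ to $0$ at the kink --- holds except in a degenerate parameter configuration, which must be excluded explicitly; in that degenerate case $\psi$ has a genuine jump and no $\lambda^*$ exists for $x_0$ in the gap (randomised/limiting strategies are then needed, cf.\ \cite{19}). (b) Attainability: in the paper's admissible class $\pi\in\HH^2_t(\R)$ is a \emph{proportion}, so the wealth is a stochastic exponential and stays strictly positive a.s.; since $X^*_T=0$ with positive probability whenever $\p\bigl(\lambda^*\zeta_T > U_1'(\eta(R_T)-R_T)\bigr)>0$, no admissible $\pi$ attains $X^*_T$ exactly. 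One must either pass to amount-invested controls $\Pi=\pi X$ (as the paper itself does in Section \ref{sec_4}) and hedge $X^*_T$ by martingale representation, or interpret the optimum in a limiting/closure sense. You flag both issues honestly, but flagging is not resolving: as written, the first is repaired by the wrong mechanism and the second is left open.
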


Theorem \ref{thm_concavification} explicitly describes the behaviour of the optimal state process $X^*$ in terms of the state price density $\zeta$. In particular, given the structure of the dual utility $\tilde{U}$  in (\ref{eq_dual_utility}), we see that there exists a threshold random variable  $\zeta^* \defeq  \frac{1}{\lambda^*}U_1'( \eta(R_T) - R_T)$ such that $X^*_T = 0$ if and only if $\zeta_T \geq \zeta^*$.

\begin{remark}
This result comes from \cite{3} in which  a utility of the form $U(A(X_T - R_T)^+ + B)$ is considered for some $A, B > 0$, so the utility is constant when $X_T < R_T$, with $U_2 \equiv -U(B)$. However, the proof still holds for a general S-shaped utility with $U(-R_T) = -U_2(R_T) > -\infty$ almost surely.
\end{remark}

Given this concavification principle we have $v(t, x, r) = \bar{v}(t, x, r)$ for all $t < T$
in the complete market setting, and we may proceed to solve the concavified problem.

\section{PINN for Power Utility} \label{sec_3}

In this section we consider the utility, for $z \in \R$,
\begin{align*}
U(z) = \begin{cases}
z^p & z \geq 0, \\
- K |z|^p & z < 0,
\end{cases}
\end{align*}
for some $K \geq 0$ and $p \in (0,1)$.  In this setting, the dual utility (\ref{eq_dual_utility}) becomes, for $y, r > 0$,
\begin{align*} 
\tilde{U}(y, r) & = \begin{cases}
(1 - p)\left(\frac{y}{p}\right)^{\frac{p}{p-1}} - ry & 0 < y \leq  p({\color{blue} \eta}(r) - r)^{p - 1}, \\
-Kr^p & y > p({\color{blue} \eta}(r) - r)^{p - 1}.
\end{cases}
\end{align*}

\subsection{Problem Reduction}

 Define a non-negative process $Z_t = \frac{X_t}{R_t}$ for $t \in [0,T]$.  Then $Z_t$ satisfies the following SDE:
\begin{align*}
dZ_t 
& = Z_t \left(\alpha - a + b^2 + \pi_t  \sigma \left(\theta - \rho b\right)\right) dt + Z_t \pi \sigma dW_t - Z_t b dW^R_t. 
\end{align*}
 Define an exponential martingale
$ F_t = e^{pb W^R_t - \half p^2 b ^2 t}$ for $t \in [0,T]$
and a new measure $\Q$ by $\frac{d\Q}{d \p} =  F_{T}$. Then by Girsanov's theorem, the processes
\begin{align*}
\tilde{W}_t &  = W_t - \rho p b t, \quad
\tilde{W}^R_t = W^R_t - p b t,
\end{align*}
are $\Q$ Brownian motions with correlation $\rho$. SDE for $Z$ can be written equivalently as
$$
dZ_t  =  Z_t \left(\alpha_0  + \pi_t \sigma\bar{\theta}\right) dt + Z_t \pi \sigma d\tilde{W}_t - Z_t b d\tilde{W}^R_t, 
$$
where $\alpha_0 \defeq \alpha - a - b^2 (p - 1)$ and $\bar{\theta} \defeq \theta + \rho b ( p - 1)$. 
Furthermore, by Bayes' formula (eg \cite[Prop. 1.2.12]{17}), we have, for $(t, x, r) \in [0,T] \times [0, \infty) \times (0, \infty)$, 
\begin{align} \begin{split} \label{eq_value_conversion}
v(t, x, r) &= \sup_{\pi \in \HH^2_t(\R)} \E\left[ R_T^p U(Z_T - 1) |Z_t=z\right]\\
&  = \sup_{\pi \in \HH^2_t(\R)} \E^\Q\left[ \frac{F_t R_T^p}{ F_T} U(Z_T - 1)|Z_t=z \right]\\
&= r^p H_{T - t} g(t,z),
\end{split}
 \end{align}
 where $z=x/r$, $H_s \defeq e^{p\left(a + \half b^2 (p - 1)\right)s}$ for $0 \leq s \leq T$, and
\begin{align} \label{eq_value_new}
g(t, z) \defeq  \sup_{\pi \in \HH^2_t(\R)} \E^{\Q}\left[ U(Z_T - 1) \middle| Z_t = z\right].
 \end{align}
The HJB equation associated to the  problem (\ref{eq_value_new}) is 
\begin{equation} \label{eq_hjb_reduced}
 \partial_t g + \sup_{\pi \in \R}\left\{\partial_z g z \left(\alpha_0  + \pi \sigma\bar{\theta}\right) + \half \partial_{zz} gz^2 (\pi^2 \sigma^2 -  2 \rho b \sigma \pi + b^2)\right\}=0,
\end{equation}
with the terminal condition $g(T, z) = U(z - 1)$ for $z \in [0, \infty)$.  
The candidate optimal control is given by,  for $(t, z) \in [0,T] \times [0, \infty)$,
\begin{align}\label{eq_opt_control_reduced}
\pi^*(t, z) = - \frac{\partial_z g(t, z)  \bar{\theta}  }{\partial_{zz} g(t, z)z \sigma} +   \frac{\rho b}{\sigma},
\end{align}
which leads to a nonlinear PDE
\[ \partial_t g + \partial_z g z \alpha_0  + \half \partial_{zz} gz^2 b^2 - \half \frac{\left(\partial_z g  \bar{\theta} -  \partial_{zz} gz \rho b  \right)^2}{\partial_{zz} gz \sigma}=0.\]

Problem (\ref{eq_value_new}) is simpler as the reference is now a constant. However the function $U$ is still a non-concave function for $z\geq 0$ and non-differentiable at $z=1$. Define, for $(t, z) \in [0,T] \times [0, \infty)$,
$$
\bar{g}(t, z)  \defeq \sup_{\pi \in \HH^2_t(\R)} \E^{\Q}\left[ \bar{U}(Z_T, 1) \middle| Z_t = z\right],
$$
 where $\bar U(z,1)$ is the concave envelope of $U(z-1)$ for $z\geq 0$. 
The concavified value function $\bar g$ satisfies the same equation (\ref{eq_hjb_reduced}) with the concave terminal condition
$\bar{g}(T, z)  = \bar{U}(z, 1)$. In both cases we can use the fact that $Z$ is a geometric Brownian motion for any $\pi$, with $Z_T = 0$ if and only if $Z_t = 0$ for any $t < T$, so we can use the boundary condition
\[g(t, 0) = \bar{g}(t, 0) = U(-1) = -K, \quad 0 \leq t \leq T.\]

If $|\rho| = 1$, it is known that  $g$ coincides with $\bar{g}$ on $[0,T) \times (0,\infty)$, see   Section \ref{sec_solution}. If $|\rho|<1$, it is not yet clear if there exists a duality gap between $g$ and $\bar g$.

\subsection{Dual Problem} 

Due to incompleteness arising from  Brownian motion $W^R$, the dual problem contains a dynamic dual control and is  a true control problem. We determine the dual problem explicitly. Firstly, note that the process $Z$ is non-negative as  $X$ and $R$ are non-negative, with $R$ strictly positive. The dual process $Y_t$ has the form, for  $0 \leq t \leq T$, 
\begin{align*}
dY_t = -Y_t \gamma_t dt -Y_t \beta_t d\tilde{W}_t - Y_t \xi_t d\tilde{W}^R_t,
\end{align*}
where $\gamma, \beta, \xi$ and $Y_0=y$ are to be determined. We have
\begin{align*}
d(YZ)_t & = Y_t Z_t \left(\alpha_0 - \gamma_t + b(\xi_t + \rho \beta_t) + \pi_t \sigma \left(\bar{\theta} - (\rho \xi_t + \beta_t)\right)\right) dt \\
& {} + Y_tZ_t (\pi_t\sigma - \beta_t) d\tilde{W}_t - Y_tZ_t(\xi_t + b)d\tilde{W}^R_t.
\end{align*}
To ensure that $YZ$ is a $\Q$-supermartingale for all $\pi \in \HH^2_0(\R)$, we must have
$$
 \alpha_0 - \gamma_t + b(\xi_t + \rho \beta_t) + \pi_t \sigma \left(\bar{\theta} - (\rho \xi_t + \beta_t)\right) \leq 0, 
$$
which implies
$\beta_t  = \bar{\theta} - \rho \xi_t$.
We may choose a minimal $\gamma_t$ as 
$\gamma_t  = \alpha_0 + b(1 - \rho^2)\xi_t + b \rho \bar{\theta} = \bar{\alpha} + b(1 - \rho^2)\xi_t$,
where $\bar{\alpha} \defeq \alpha - a + \rho b \theta$. 
This gives us the dual process, for $0 \leq t \leq T$,
\begin{align*}
dY_t = -Y_t \left(\bar{\alpha} + b(1 - \rho^2)\xi_t \right) dt - Y_t \left( \bar{\theta} - \rho \xi_t \right) d\tilde{W}_t - Y_t \xi_t d\tilde{W}_t^R
\end{align*}
with dual control process $\xi$. We then have, for any $\pi, \xi, Y_t=y, Z_t=z$, $0\leq t\leq T$,
\begin{align}\label{eq_duality_inequality}
\E^\Q[\bar{U}(Z_T , 1)|Z_t=z] 
\leq \E^\Q[\tilde{U}(Y_T, 1) |Y_t=y]+yz.
\end{align}
Define the dual value function $\tilde{g} \colon [0,T] \times (0, \infty) \to \R$ by
\begin{align}\label{eq_value_new_dual}
\tilde{g}(t, y) \defeq \inf_{\xi \in \HH^2_t(\R)}\E^\Q\left[ \tilde{U}\left(Y_T, 1\right) \middle| Y_t = y\right].
\end{align}
The HJB equation associated to the dual problem (\ref{eq_value_new_dual}) is 
$$ \partial_t \tilde{g} + \inf_{ \xi \in \R} \left\{ -y \partial_y \tilde{g} \left(\bar{\alpha} + b(1 - \rho^2)\xi\right) + \half y^2 \partial_{yy} \tilde{g}\left( \bar{\theta}^2 + \xi^2 (1 - \rho ^2)\right)\right\}=0. 
$$
The candidate dual optimal control is given by,  for $(t, y) \in [0,T] \times [0, \infty)$,
\[\xi^*(t, y) =  \frac{  \partial_y \tilde{g}(t, y) b}{  y \partial_{yy} \tilde{g}(t, y)},\]
which leads to  a nonlinear PDE
$$
  \partial_t \tilde{g}   - \bar{\alpha} y \partial_y \tilde{g}    -\half  (1 - \rho ^ 2) \frac{ (b\partial_y \tilde{g} )^2}{\partial_{yy} \tilde{g}}  + \half y ^2 \partial_{yy}\tilde{g} \bar{\theta}^2=0
$$
with the terminal condition $\tilde g(T,y)=\tilde U(y,1)$.

In the case $|\rho| = 1$, the HJB equation is independent of $\xi$, the dual value function is simply given by $\tilde{g}(t, y) = \E^\Q\left[ \tilde{U}\left(Y_T, 1\right)|Y_t=y\right]$ and can be computed explicitly,  where $Y_T$ is a lognormal random variable. 

Maximising the left side of (\ref{eq_duality_inequality}) over $\pi$ and minimising  the right side of (\ref{eq_duality_inequality}) over $\xi$ and $y$ yields the weak duality relation
\[g(t, z) \leq \bar{g}(t, z) \leq \inf_{y > 0} \left\{\tilde{g}(t, y)  + zy\right\},\]
for all $t \in [0,T]$ and $z \in (0, \infty)$. Since we mainly focus on numerical computation of value functions with deep neural network (DNN), we assume in the rest of the paper that $g, \bar g, \tilde g$ are $C^{1,2}$, the candidate optimal controls are well defined, and all satisfy required integrability conditions when substituted by underlying controlled state processes. We can then characterize and compute them with the corresponding HJB equations. 

\subsection{Dual Value Function and Concavification Principle in Complete Market Case} \label{sec_solution}

In this section we take $|\rho| = 1$, and evaluate the function $\tilde{g}$ given by (\ref{eq_value_new_dual}) explicitly. We use this explicit representation to solve the primal concavified problem and show that its solution coincides with the non-concave primal problem. We consider a general constant reference $\bar{R} \in [0, \infty)$, taking $\bar{R} = 1$ recovers (\ref{eq_value_new_dual}) and $\bar{R} = 0$ recovers the standard utility maximisation problem.

The dynamics of $Y$ in the complete market setting are given by
\[dY_t  = - \bar{\alpha} Y_t dt - \bar{\theta} Y_t d{\color{blue}\tilde{W}_t}, \quad 0 \leq t \leq T, \]
where $\bar{\alpha} = \alpha - a + \rho b \theta$ and $\bar{\theta} = \theta + \rho b (p - 1)$.
Since $Y$ follows a geometric Brownian motion, after some lengthy calculation, we get that,  for $(t, y) \in [0,T] \times (0,\infty)$,
\begin{eqnarray*}
\tilde{g}(t, y) 
& =& \E^{\color{blue}\Q}\left[\tilde{U}(Y_T, \bar{R}) \middle| Y_t = y\right] \\
& =& y^{\frac{p}{p-1}}p^\frac{p}{1-p}(1 - p) \exp\left(-{\frac{p}{p-1}}\left(\bar{\alpha} - \frac{|\bar{\theta}|^2}{2 (p - 1)}\right) \tau \right) \Phi\left(  -k(\tau, y) - \frac{p}{p-1} \bar{\theta} \sqrt{\tau}\right) \\
&& - \bar{R} y \exp(-\bar{\alpha} \tau) \Phi(-k(\tau, y)-\bar{\theta} \sqrt{\tau}) + u_0 \Phi(k(\tau, y))\\
\partial_y \tilde{g}(t, y) 
&=&  - y^{\frac{1}{p-1}}p^\frac{1}{1-p} \exp\left(-{\frac{p}{p-1}}\left(\bar{\alpha} - \frac{|\bar{\theta}|^2}{2 (p - 1)}\right) \tau \right) \Phi\left(  -k(\tau, y) - \frac{p}{p-1} \bar{\theta} \sqrt{\tau}\right) \\
&& - \bar{R}  \exp(-\bar{\alpha} \tau) \Phi(-k(\tau, y)-\bar{\theta} \sqrt{\tau}) \\
\partial_{yy} \tilde{g}(t, y)
& =& y^{\frac{2 - p}{p-1}}p^{\frac{1}{1-p}}\frac{1}{1-p} \exp\left(-{\frac{p}{p-1}}\left(\bar{\alpha} - \frac{|\bar{\theta}|^2}{2 (p - 1)}\right) \tau \right) \Phi\left(  -k(\tau, y) - \frac{p}{p-1} \bar{\theta} \sqrt{\tau}\right) \\
&& + \phi\left( k(\tau, y) \right) \frac{\hat{z} \hat{u}}{\bar{\theta} \sqrt{\tau} y^2},
\end{eqnarray*}
where $\Phi$ is the cumulative distribution function of a standard normal variable, $\phi(x)=\Phi'(x)$, $\tau = T - t$, $\hat{z} = {\color{blue} \eta}(\bar{R})$, $\hat{u} = U'(\hat{z} - \bar{R}) = p\left(\hat{z} - \bar{R}\right)^{p-1}$, $u_0 = -U_2(\bar{R}) = - K \bar{R}^p$, and 
\[k(\tau, y) \defeq \frac{\log(y) - \log(\hat{u}) - (\bar{\alpha} + \half |\bar{\theta}|^2) \tau}{\bar{\theta} \sqrt{\tau}}.\]

Note that $\tilde{g}\in C^{1,2}$, $\partial_{yy} \tilde{g}(t, y) > 0$, $\lim_{y \to 0} \partial_y \tilde{g}(t, y)  = -\infty$ and  $\lim_{y \to \infty} \partial_y \tilde{g}(t, y) = 0$ for all $(t, y) \in [0,T) \times (0, \infty)$, so $\tilde g(t,y)$ is a strictly convex decreasing function in $y$ for all $t\in [0,T)$. 
 The optimal state process for the concavified problem is given by 
\begin{align} \label{eq_dual_to_primal}
Z^*_t = -\partial_y \tilde{g}(t, Y^*_t), \quad 0 \leq t \leq T,
\end{align}
where $Y^*$ satisfies the dual state dynamics with the initial value determined from the equation $z_0 = -\partial_y \tilde{g}(0, Y^*_0)$, where $z_0 = \frac{x_0}{r_0}$ is the initial scaled wealth. 

The next result shows that $U(Z^*_T - \bar{R}) = \bar{U}(Z^*_T, \bar{R} )$ and therefore $Z^*$ is optimal for the original problem.
\begin{lemma} \label{lem_conv_principle} The optimal state process $Z^*$ given by (\ref{eq_dual_to_primal}) satisfies
$Z^*_T \in \{0\} \cup [\hat{z}, \infty)$ almost surely.
\end{lemma}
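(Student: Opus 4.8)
The claim is that $Z_T^* = -\partial_y \tilde g(T, Y_T^*)$ takes values only in $\{0\}\cup[\hat z,\infty)$. The plan is to read off the range of the map $y \mapsto -\partial_y \tilde g(T, y)$ directly from the explicit formula, using the terminal limit $\tau = T-t \to 0$, and then argue that each realization of $Z_T^*$ is the image of the corresponding realization of $Y_T^*$ under this limiting map. Because the statement is almost sure and purely about the terminal value, it suffices to understand the boundary behaviour of $-\partial_y\tilde g(t,y)$ as $t\uparrow T$, i.e.\ as $\tau\downarrow 0$.

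**Analyzing the limit $\tau\downarrow 0$.**

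First I would examine $k(\tau,y)$ as $\tau\downarrow 0$. Since $k(\tau,y) = \frac{\log y - \log\hat u - (\bar\alpha + \frac12|\bar\theta|^2)\tau}{\bar\theta\sqrt\tau}$, the sign of $\log y - \log\hat u = \log(y/\hat u)$ dominates the $1/\sqrt\tau$ scaling. Thus as $\tau\downarrow 0$:
\begin{itemize}
\item if $y < \hat u$, then $k(\tau,y)\to -\infty$ (assuming $\bar\theta>0$; the sign must be tracked carefully);
\item if $y > \hat u$, then $k(\tau,y)\to +\infty$;
\item the threshold $y=\hat u$ is a measure-zero event for the lognormal $Y_T^*$.
\end{itemize}
Correspondingly, the arguments $-k-\frac{p}{p-1}\bar\theta\sqrt\tau$ and $-k-\bar\theta\sqrt\tau$ have the same limiting behaviour as $-k$. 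Plugging into the formula for $\partial_y\tilde g$, I expect that for $y>\hat u$ the two $\Phi$ terms collapse ($\Phi(-\infty)=0$), giving $-\partial_y\tilde g(T,y)=0$; while for $y<\hat u$ the $\Phi$ terms tend to $1$, giving a strictly positive limit. The key quantitative step is to evaluate that positive limit at the threshold $y=\hat u$ and confirm it equals $\hat z = \eta(\bar R)$, so that by monotonicity (recall $\partial_{yy}\tilde g>0$ means $-\partial_y\tilde g$ is decreasing in $y$) the values for $y<\hat u$ all lie in $[\hat z,\infty)$.

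**Assembling the argument.**

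Concretely, I would show $\lim_{t\uparrow T}(-\partial_y\tilde g(t,y)) = (y/p)^{1/(p-1)}\mathbf 1_{\{y<\hat u\}}$ up to the exponential prefactors that tend to $1$ as $\tau\to 0$, and then verify that at $y=\hat u$ this equals $(\hat u/p)^{1/(p-1)} = \hat z - \bar R + \bar R$; indeed from $\hat u = p(\hat z-\bar R)^{p-1}$ one gets $(\hat u/p)^{1/(p-1)} = \hat z - \bar R$, and I would check whether the $-\bar R\exp(-\bar\alpha\tau)\Phi(\cdots)$ term contributes the missing $\bar R$ in the limit so that the total is exactly $\hat z$. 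Since $-\partial_y\tilde g(T,\cdot)$ is decreasing, $y<\hat u$ yields values $\ge \hat z$ and $y>\hat u$ yields the value $0$. Finally, because $Z_T^*=-\partial_y\tilde g(T,Y_T^*)$ and $\p(Y_T^*=\hat u)=0$, the range is almost surely contained in $\{0\}\cup[\hat z,\infty)$.

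**Main obstacle.**

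The delicate part is the interchange of the limit $\tau\downarrow 0$ with the evaluation at the (random) threshold and the exact bookkeeping of the $\bar R$ term: one must confirm that the positive branch limits precisely to $\hat z$ rather than $\hat z-\bar R$, which requires the $-\bar R\exp(-\bar\alpha\tau)\Phi$ contribution to survive with value $\bar R$ on the set $\{y<\hat u\}$. A clean alternative that sidesteps the limit entirely is to invoke Theorem~\ref{thm_concavification}: in the complete-market case the concavified optimiser satisfies $Z_T^* = -\partial_y\tilde U(\lambda^*\zeta_T,\bar R)$, and from the explicit dual utility \eqref{eq_dual_utility} the map $y\mapsto -\partial_y\tilde U(y,\bar R)$ manifestly takes values in $\{0\}\cup[\hat z,\infty)$ by construction of $\bar U$; one then only needs to match $-\partial_y\tilde g(T,\cdot)$ with $-\partial_y\tilde U(\cdot,\bar R)$ in the $\tau\to 0$ limit. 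I would present the direct limiting computation as the main line and note the conceptual shortcut via concavification as confirmation.
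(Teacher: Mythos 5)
Your proposal follows essentially the same route as the paper's proof: both analyse the limit $\tau \downarrow 0$ of $k(\tau,y)$ and hence of $Z^*_t = -\partial_y \tilde g(t,y)$, splitting into the cases $y > \hat u$ (limit $0$), $y < \hat u$ (limit $y^{1/(p-1)}p^{1/(1-p)} + \bar R$), and the measure-zero threshold $y = \hat u$. The paper resolves your bookkeeping concern exactly as you anticipated: the $\bar R\,\Phi(-k(\tau,y)-\bar\theta\sqrt{\tau})$ term survives with value $\bar R$ on $\{y < \hat u\}$, and monotonicity of $y \mapsto y^{1/(p-1)}$ then gives $Z^*_t \to y^{1/(p-1)}p^{1/(1-p)} + \bar R > \hat z$ there.
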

\begin{proof}
In the limit $\tau \to 0$ (or $t \to T$), we have 
$k(\tau, y) \to \infty$ if $y > \hat{u}$ and $-\infty$ if $y < \hat{u}$. If $y > \hat{u}$ then $ \Phi(-k(\tau, y)) \to 0$, hence $Z^*_t \to 0$. If $y < \hat{u}$ then $ \Phi(-k(\tau, y)) \to 1$, hence, noting that $y \mapsto y^{\frac{1}{p-1}}$ is a decreasing function,
\[Z^*_t \to   y^{\frac{1}{p - 1}}p^{\frac{1}{1 - p}} + \bar{R} > p^{\frac{1}{1 - p}} \left(p (\hat{z} - \bar{R})^{p - 1}\right)^{\frac{1}{p - 1}} + \bar{R} = \hat{z}. \]
If $y = \hat{u}$ then $\Phi(-k(\tau, y)) \to \Phi(0) = \half$ and $Z^*_t \to  \half \hat{z} \notin  \{0\} \cup [\hat{z}, \infty)$, but $\p(Y^*_T = \hat{u}) = 0$.
\end{proof}

\subsection{PINN Method}

We begin by describing the PINN method for general second order PDEs, with dimensional state in the region $(0, \infty)^d$. Let $F \colon [0,T) \times (0, \infty)^d \times \R \times \R \times \R^d \times \R^{d \times d} \to \R$, $G \colon (0, \infty)^d \to \R$ and $H \colon [0,T) \times \R^d \to \R$ be functions. Consider the PDE \begin{align*}
F(t, x, u, \partial_t u, \partial_x u, \partial_{xx}u) & = 0, \quad t \in [0,T), X \in (0, \infty)^d, \\
u(T, X) & = G(X) , \quad X \in (0, \infty)^d, \\
u(t, X) & = H(t, X), \quad t \in [0,T), X \in \partial (0,\infty)^d.
\end{align*}
where $\partial (0,\infty)^d \defeq \{(x_i)_{i = 1}^d \in (0, \infty)^d \colon x_i = 0 \text{ for some } i = 1, \ldots, d\}$ denotes the finite boundary of $(0,\infty)^d$.
Let $(t, X) \in [0,T] \times \R^d$. We solve the PDE using the PINN method. The PDE solution is approximated by a neural network
\begin{align*}
u^{\Theta}(t, X) & \defeq \Norm(t, X; \Theta)
\end{align*}
based on some  parameter set $\Theta$. The function $\Norm$ is defined as a series of alternating compositions of linear with a nonlinear activation function $h \colon \R \to \R$ (acting element-wise)
\[\Norm(t, X; \Theta) = f_M \circ h \circ f_{M-1} \cdots \circ h \circ f_0((t, X)),\]
for some fixed $M \in \N$ (representing the number of `layers' of the network), where $f_i(z) = A_i z + b_i$ for some matrix $A_i$ and vector $b_i$ of appropriate (fixed) dimension, and we view $(t, X)$ as a single vector in $\R^{1 + d}$. The dimension of the final layer is 1, but the middle layers have some pre-determined higher dimension. The larger the dimensions of $b_i$ for $i = 0, \ldots, M - 1$, the better approximation capabilities of the neural network \cite{15}, but also the more variables in the network, leading to longer training times. For the activation function, in practise we take $h(z) = \tanh(z)$. The parameter set $\Theta$ is made up of the individual elements of $(A_i, b_i)_{i = 0}^M$, which are optimised by the PINN algorithm, using a loss function defined in the sequel. At step 0 we begin with a parameter set $\Theta_0$ with elements initialised randomly.

We choose a grid size $\mathcal{X} \defeq (X_{\min}, X_{\max})^d \in (0, \infty)^d$, then uniformly sample three batches: the collocation data within the time horizon, given by  $(t_j, X_j) \in [0,T) \times \mathcal{X} $ for $j = 1, \ldots, \mathcal{M}_c$,  the boundary data $(T, X^b_j) \in \{T\} \times \mathcal{X}$ for $j = 1, \ldots, \mathcal{M}_b$ and the zero state data $(t^s_j, X^s_j) \in [0,T] \times \partial(X_{\min}, X_{\max})^d$ for $j = 1, \ldots, \mathcal{M}_s$.
Using these sample data we define the loss function, accounting for the PDE and the terminal and boundary conditions, as a combination of sample averages
\begin{eqnarray*}
L(\Theta) & =& \frac{1}{\mathcal{M}_c}\sum_{ j = 1}^{\mathcal{M}_c} \left|F\left(t_j, X_j, u^\Theta\left(t_j, X_j\right), \partial_t u^\Theta\left(t_j, X_j\right), \partial_x u^\Theta\left(t_j, X_j\right), \partial_{xx}u^\Theta\left(t_j, X_j\right)\right)\right|^2 \\
&& + \frac{1}{\mathcal{M}_b}\sum_{ j = 1}^{\mathcal{M}_b}  \left|u^{\Theta}\left(T, X^b_j\right) - G\left(X^b_j\right)\right|^2  + \frac{1}{\mathcal{M}_s}\sum_{ j = 1}^{\mathcal{M}_s}  \left|u^{\Theta}\left(t^s_j,X^s_j\right) - H(t^s_j)\right|^2.
\end{eqnarray*}
In this loss function we assume that $X_{\min} \approx 0$. We may then perform the following update (acting element-wise on the parameters of the neural networks)
\begin{align*}
\Theta_{k+1} & = \Theta_{k} - \epsilon \partial_{\Theta}L(\Theta_k), \quad k \in \N,
\end{align*}
given some learning rate $\epsilon > 0$. We repeat this step until convergence - either a fixed number of iteration steps, or a suitably low loss is attained.

We can use the PINN to solve the primal and dual HJB equations. 
For example, for the primal problem, upon convergence of the algorithm, with some outputted parameter set $\Theta$ and neural network function denoted by $g^\Theta$, we rescale and output the following approximation of $v$ at time 0. Given some range $(Z_{\min}, Z_{\max}) \subset \R$ of $z$ values used in the PINN, and some output size $\mathcal{B} \in \N$, we define
\begin{align} \begin{split} \label{eq_output_pinn1}
Z_i &= \left(Z_{\min} + \frac{i}{\mathcal{B}} (Z_{\max}-Z_{\min})\right), \\
X_i &= r_0 Z_i, \\
v_i &= r_0^p H_{T} g^\Theta(0, Z_i), \\
\pi_i & =  - \frac{\partial_z g^\Theta(0, Z_i)  \bar{\theta}  }{\partial_{zz} g^\Theta(0, Z_i)Z_i \sigma} +   \frac{\rho b}{\sigma}, 
\end{split}
\end{align}
for $i = 1, \ldots, {\mathcal{B}}$, where $H$ is defined in (\ref{eq_value_conversion}). 

The dual problem can be solved similarly. 
We do not optimise the dual start value $y$, which is needed to solve the primal problem. However we end up with a range of values of $\partial_y \tilde{g}(t, y)$ which correlate with the $z$. When there is no duality gap, we have the following relation between the dual and primal problems for $t \in [0,T]$
\begin{align} 
\begin{split}
\bar{g}(t, Z^*_t) & = \tilde{g}(t, Y^*_t) + Y^*_t Z^*_t, \\
Z^*_t & = - \partial_y \tilde{g}(t, Y^*_t), \\
\partial_z \bar{g}(t, Z^*_t) & = Y^*_t, \\
\partial_{zz} \bar{g}(t, Z^*_t) & = \frac{-1}{\partial_{yy} \tilde{g}(t, Y^*_t)}.
\end{split}
\label{eq_duality_relations}
\end{align}
\begin{remark}
In practice we do not know if these are true, especially in the incomplete market case, where the value function may not be differentiable and there may be a duality gap. We can still evaluate these terms though, as the neural network function is  continuously differentiable, and compare them to their primal counterparts to see if there is a gap.
\end{remark}
Upon convergence of the algorithm, with some outputted parameter set $\tilde{\Theta}$ and neural network function $\tilde{g}^{\tilde{\Theta}}$, we rescale and output the following approximation of $v$ at time 0. Given some range $(Y_{\min}, Y_{\max}) \subset \R$ of $y$ values used in the PINN, and some output size ${\mathcal{B}} \in \N$, we define
\begin{align}  \begin{split} \label{eq_output_pinn2}
Y_i &= \left(Y_{\min} + \frac{i}{\mathcal{B}} (Y_{\max}-Y_{\min})\right), \\
X^d_i &= - r_0 \partial_y \tilde{g}^{\tilde{\Theta}}(0, Y_i), \\
v^d_i &= r_0^p H_{T} \left(\tilde{g}^{\tilde{\Theta}}(0, Y_i) - Y_i \partial_y \tilde{g}^{\tilde{\Theta}}(0, Y_i)\right) , \\
\pi^d_i & =  - \frac{Y_i  \bar{\theta} \partial_{yy} \tilde{g}^{\tilde{\Theta}}(0, Y_i) }{\partial_y \tilde{g}^{\tilde{\Theta}}(0, Y_i) \sigma} +   \frac{\rho b}{\sigma}, 
\end{split}
\end{align}
for $i = 1, \ldots, {\mathcal{B}}$, where $H$ is defined in (\ref{eq_value_conversion}). 

\section{General Utility} \label{sec_4}

We return to the original problem for general utilities $U_1, U_2$, given by (\ref{eq_value}).  
Since we mainly focus on numerical computation of value functions with DNN, we again assume all regularity and integrability conditions are satisfied for all functions whenever needed and all equations are well defined with existence and uniqueness of solutions.  We can then characterize value functions, optimal controls, etc. with the  DPP or the SMP and solve the corresponding HJB equation or the adjoint BSDE with DNN. 

\subsection{PINN for General Utility}

The value function  $v$ is a classical solution to the HJB equation 
\begin{align} \begin{split} \label{eq_hjb_general}
0 & = \partial_t v + \sup_{\pi \in \R} \left\{ x(\alpha + \pi \sigma \theta) \partial_x v + \half x^2 \pi^2 \sigma^2 \partial_{xx} v + a r \partial_r v + \half b^2 r^2 \partial_{rr}v + \rho \sigma \pi b r x \partial_{xr} v \right\}, \\
& =  \partial_t v +  x\alpha \partial_x v + a r \partial_r v + \half b^2 r^2 \partial_{rr}v  - \half \frac{\left(\theta \partial_x v + \rho b r  \partial_{xr} v\right)^2}{ \partial_{xx} v }
\end{split}
\end{align}
with the terminal condition $v(T,x, r) = U(x - r)$. The optimiser $\pi^*$ is given by, for $(t, x, r) \in [0,T] \times (0,\infty)^2$,
\begin{align} \label{eq_hjb_control}
\pi^*(t, x, r) = -  \frac{\theta \partial_x v(t, x, r) + \rho b r  \partial_{xr} v(t, x, r)}{ \sigma x \partial_{xx} v(t, x, r) }.
\end{align}
Similarly for the concave problem (\ref{eq_value_conc}) the function $\bar{v}$ satisfies (\ref{eq_hjb_general}) with the terminal condition $\bar{v}(T,x, r) =\bar {U}(x , r)$. We may define the dual problem here, similar to (\ref{eq_value_new_dual}), for $(t, x, r) \in [0,T] \times [0,\infty)^2$,
\[\tilde{v}(t, y, r) = \inf_{\xi \in \HH^2_t(\R)}\E\left[\tilde{U}(Y_T, R_T) \middle| Y_t = y, R_t = r\right]\]
subject to
\begin{align*}
dY_t = -Y_t \alpha dt - Y_t \left( \theta - \rho \xi_t \right) dW_t - Y_t \xi_t dW_t^R, \quad 0 \leq t \leq T.
\end{align*}
The HJB equation is 
\begin{align*}
0  = \partial_t \tilde{v} + \inf_{ \xi} \bigg\{  &-y \alpha \partial_y \tilde{v} + \half y^2  \left(\theta^2 + \xi^2(1 - \rho^2)\right) \partial_{yy} \tilde{v}\\
&  + a r \partial_r v + \half b^2 r^2 \partial_{rr}\tilde{v} -  r b y \left( \rho  \theta + \xi(1 - \rho^2) \right) \partial_{yr} \tilde{v} \bigg\},  \\
 =  \partial_t \tilde{v}  -y \alpha \partial_y \tilde{v} &+ \half y^2  \theta^2 \partial_{yy} \tilde{v}  + a r \partial_r \tilde{v} + \half b^2 r^2 \partial_{rr}\tilde{v} -  r b y  \rho  \theta \partial_{yr} \tilde{v} 
 - \half \frac{(r b\partial_{yr} \tilde{v})^2(1 - \rho^2)}{\partial_{yy} \tilde{v}} 
\end{align*}
with the terminal condition $\tilde{v}(T,y, r) =\tilde {U}(y, r)$. 

Again, we have three PDEs that the PINN method may be applied to. The difference here is that the state process becomes multidimensional, with $d = 2$. The output of these algorithms is  the same as (\ref{eq_output_pinn1}) and (\ref{eq_output_pinn2}), applied at a range of $r$ values. We can use the boundary conditions, for $(t, r) \in [0, T] \times [0, \infty)$, 
\begin{align*}
v(t, 0, r) =  \bar{v}(t, 0, r) = \E\left[-U_2(R_T)\right]. 
\end{align*}
Additionally, when $r = 0$ we can use the standard utility maximisation result, which is a special case of the solution in Section \ref{sec_solution}, given by, for $(t, x) \in [0, T] \times [0, \infty)$,
\[v(t, x, 0) =e^{p\left(\alpha + \frac{\theta ^ 2}{2 (1 - p)}\right)   (T - t)} x^p. \]

\subsection{SMP for General Utility}

As an alternative to the HJB method, which requires a Markovian structure, we may apply the SMP instead. In this method we treat the reference $R$ as a random variable (that we can simulate) and treat the utility as a random function, taking in $(\omega, x) \in \Omega \times (0, \infty)$ as input. 
To apply the SMP as it is given in the literature, for example \cite{21}, we assume a complete market here, so $R_T$ is replicable. We also require the terminal utility to be concave, so we apply the SMP to the concave problem. 

For this section we change the problem slightly, defining $\Pi_t = \pi_tX_t$ and letting this be our control variable. This means that the control represents the amount invested in the stock, rather than the proportion of wealth invested. The state process dynamics becomes
\begin{align} \label{eq_state_amount}
dX_t = \left(rX_t + \sigma \theta \Pi_t \right)dt + \sigma \Pi_t dW_t, \quad 0 \leq t \leq T.
\end{align}
In this case, $X$ is no longer a geometric Brownian motion and may be negative. We therefore require $\Pi$ to be in the admissible control set, defined by, for $(t, x) \in [0,T] \times [0,\infty)$,
\begin{align*} 
\A(t, x) = \left\{ \Pi \in \HH^2_t(\R) \middle| X^{t, x, \Pi}_s \geq 0 \text{ a.s. for all } s \in [t,T] \right\},
\end{align*}
where $X^{t, x, \Pi}$ satisfies (\ref{eq_state_amount}) with $X^{t, x, \Pi}_t = x$. 
Now we define the random utility $U^R \colon \Omega \times [0,\infty) \to \R$ by $U^R(\omega, x) \defeq \bar{U}(x, R_T(\omega))$, and the problem becomes, for $(t, x, r) \in [0,T] \times [0,\infty)^2$,
\[v(t, x, r) = \sup_{\Pi \in \A(t, x)} \E\left[ U^R(X_T)\middle| X_t = x, R_t = r\right].\]
If this expectation cannot be determined explicitly, we evaluate it numerically. 
Let $(\hat{X}, \hat{\Pi})$ be an optimal state-control pair. Define the adjoint BSDE by, $t \in [0,T]$,
\begin{align}\begin{split}
dp_t & = -\alpha p_t dt + q_t dW_t ,   \quad 
 p_T  = -\partial_x \bar{U}(\hat{X}_T, R_T).
 \end{split} \label{eq_adjoint}
\end{align}
The SMP (see \cite{21}[Theorem 3.2]) states that  the pair ($\hat{X}, \hat{\Pi}$) is optimal if and only if for all $\Pi \in \R$, $t \in [0,T]$,
\begin{align*}
 \sigma(\theta  p_t +  q_t)(\hat{\Pi}_t - \Pi) \geq 0,
 \end{align*}
which implies  $q_t= - \theta  p_t$, 
 and $p$ has a closed-from expression  
 $p_t  = p_0 e^{-(\alpha + \half \theta^2)t - \theta W_t }$ for $0 \leq t \leq T$.
 We need to choose the initial point $p_0$ and the control  $\hat{\Pi}$ such that the terminal condition in (\ref{eq_adjoint}) is satisfied. 
To this end,   we define two neural networks as follows. For $(t, x) \in [0,T] \times [0,\infty)$,
\begin{align*}
\Pi^{\Theta_\Pi}(t, x) & = \Norm(t, x;\Theta_\Pi), \quad
p^{\Theta_p}(x)  = \Norm(x; \Theta_p),
\end{align*}
where $\Theta_\Pi$ and $\Theta_p$ denote the parameters of two neural networks, with some fixed number of hidden layers and nodes per layer.
Let $t_i = i\Delta t$, $i=0,\ldots,N$, $\Delta t=T/N$.
We choose a grid size $\mathcal{X} \defeq (X_{\min}, X_{\max}) \in (0, \infty)$.
We then define the loss function, for a given batch of $X^j_0 \in \mathcal{X}, \Delta W^j_i \sim \text{Norm}(0,1)$, $j = 1, \ldots, \M$, $i = 1, \ldots, N - 1$ as 
\begin{align} \label{eq_loss_smp}
\LL(\Theta_\Pi, \Theta_p) = \frac{1}{\M}\sum_{j = 1}^ {\M} \left|e^{-(\alpha + \half \theta^2)T - \theta W^j_T}p^{\Theta_p}(X^j_0) +  \partial_x\bar{U}(X^j_N, R_T)\right|^2 - \frac{1}{\M}\sum_{j = 1}^ {\M} \bar{U}(X^j_N, R_T),
\end{align}
where
\begin{align*}
X_{i+1}^j &= X_i^j + \left(\alpha X^j_i + \Pi^{\Theta_\Pi}(t_i, X^j_i) \theta \sigma \right) \Delta t+ \Pi^{\Theta_\Pi}(t_i, X^j_i) \sigma \sqrt{\Delta t} \Delta W^j_i, \\
W^j_N &=  \sqrt{\Delta t} \sum_{i = 0}^{N-1} \Delta W^j_i.
\end{align*}
We then iteratively perform the gradient descent update, for $k \in \N$,
\begin{align*}
\Theta_\Pi^{k+1} & = \Theta_\Pi^k - \lambda \partial_\Pi \LL(\Theta_\Pi^k, \Theta_p^k) \\
\Theta_p^{k+1} & = \Theta_p^k - \lambda \partial_p \LL(\Theta_\Pi^k, \Theta_p^k), 
\end{align*}
where $\partial_\Pi \LL$ and $\partial_p \LL$ denote the partial derivatives of $\LL$ with respect to the first and second element respectively, and $\lambda > 0$ is some learning rate. We interpret this update as the operation occurring to each individual element of the sets $\Theta_\Pi$ and $\Theta_p$. Note that while the first element of (\ref{eq_loss_smp}) should go to 0, the second term evaluates the gains function so will not be 0, but should converge to the value function as $k, \M \to \infty$. If we only had the second term in the loss function this method would not involve the adjoint equation and reduce to the method of \cite{11}.

Our $\M$ is typically small (in practise we use $\M = 500$), so the loss function does not provide a precise approximation of the value function. To output the value function for this method, we compute $ \E\left[ \bar{U}(X_T, R_T)\right]$ where $X$ is driven by $\Pi^{\Theta_{\Pi}}$ at time 0 using Monte Carlo with a much higher sample size, with the converged control parameter set $\Theta_\Pi$.

Upon convergence of the algorithm, with some outputted parameter set $\Theta_\Pi$ and neural network function $\Pi^{\Theta_\Pi}$, we output the following approximation of $v$ at time 0. Given some output size ${\mathcal{B}} \in \N$ we define
\begin{align}  \begin{split} \label{eq_output_smp}
X^s_i &= \left(X_{\min} + \frac{i}{\mathcal{B}} (X_{\max}-X_{\min})\right), \\
v^s_i &= \E^s\left[ \bar{U}(X^{i, \Pi^{\Theta_\Pi}}_T, R_T) \right], \\
\Pi^s_i & =  \Pi^{\Theta_\Pi}(0, X_i), 
\end{split}
\end{align}
for $i = 1, \ldots, {\mathcal{B}}$, where $\E^s$ here indicates a sample mean of sufficiently large sample size, and $X^{i, \Pi^{\Theta_\Pi}}$ is the wealth process satisfying (\ref{eq_state_amount}) starting at $X_i$ and driven by $\Pi^{\Theta_\Pi}$.

\section{Numerical Examples} \label{sec_5}

We take the problem coefficients $\alpha = 0.05$, $\sigma = 0.2$, $\theta = 0.5$, $K = 0.5$, $T = 0.5$, $a = 0.03$, $b = 0.1$. When we consider the power utility, we take $p = 0.5$.
We take the neural network to have 4 layers, with 2 hidden layers consisting of 50 nodes, and a tanh activation function. We sample points using $\mathcal{Z} = \mathcal{X} = (0.05, 5)$ and $\mathcal{Y} = (0.25, 1.0)$ for the SMP, primal and dual PINN methods respectively. We sample $R_0$ using the same range for the general PINN methods, and fixed  to a constant for the SMP method (either 0.5, 1.0, or 5.0 for our results). We take sample sizes of $\mathcal{M}_c = 1000$, $\mathcal{M}_b = \mathcal{M}_s = 100$, $\M = 500$, and run Monte Carlo simulations with $N = 100$. We use 20000 iteration steps for the PINN methods and 1000 steps for the SMP method, with a learning rate of $\epsilon = 0.001$ and $\epsilon = 0.01$ for the PINN and SMP methods respectively. If the loss function is below $5 \times 10^{-5}$, we terminate the algorithm early. We generate equally spaced points along our grids, evaluate the neural network functions with converged parameter sets, and compare this to a plot of the solutions (red), evaluated (also numerically) using the dual solutions in   Section \ref{sec_solution}. 

We run the algorithms using a PC with a Intel Core i9 9900K and 16.0GB DDR4 RAM, without GPU acceleration. The runtime for the algorithms for this setup is roughly 1 minute for the scaled PINN methods and 2 minutes for the general PINN methods, though the primal concavified PINN method often terminates early due to low loss. The SMP method takes about 1 minute. We run the algorithms 10 times and present the average results. 

See \texttt{https://github.com/Ashley-Davey/ML-for-S-Utility}
for the code to reproduce the results in this section.

\subsection{Converting from Scaled Problem to Regular Problem}

For the problem with scalable utility, we use the PINN methods to find the function $g$, and this function is found in the solution as well. To convert to the value function $v$ we use (\ref{eq_value_conversion}).
To recover the optimal control in (\ref{eq_hjb_control}) we use  (\ref{eq_opt_control_reduced}).

\subsection{Converting from Dual Problem to Primal Problem}

We solve for the function $\tilde{v}$ using the dual problem. To convert to the primal problem we output the candidate primal value function as, for $(t, x, r) \in [0,T] \times [0,\infty)^2$,
\begin{align} \label{eq_dual_to_primal_func}
v^d(t, x, r) = \tilde{v}(t, y(t, x, r), r) + x y(t, x, r),
\end{align}
where $y(t, x, r)$ is the minimiser of $y \mapsto \tilde{v}(t, y, r) + x y$. If there is no duality gap, we have $v^d = v$ as the true value function. To recover the candidate optimal control representing (\ref{eq_hjb_control}) we note
\begin{align*}
\partial_x v^d(t, x, r) & = y(t, x, r), \\
\partial_{xx} v^d(t, x, r) & =  \frac{-1}{\partial_{yy} \tilde{v}(t, y(t, x, r), r)}, \\
\partial_{xr} v^d(t, x, r) & = \frac{- \partial_{yr} \tilde{v}(t, y(t, x, r), r)}{\partial_{yy} \tilde{v}(t, y(t, x, r), r)}.
\end{align*}
Therefore the candidate optimal control is given by
\begin{align*}
\pi^d(t, x, r) =   \frac{\theta y(t, x, r) \partial_{yy} \tilde{v}(t, y(t, x, r), r) - \rho b r  \partial_{yr}\tilde{v}(t, y(t, x, r), r)}{ \sigma x }.
\end{align*}

\subsection{Complete Market Problem, Scalable Utility}

For this example we take $\rho = 1$, $U_1(x) =  \sqrt{x}$, $U_2(x) = \half\sqrt{x}$. We compare our methods to the solution found using the dual method given in  Section \ref{sec_solution}.
 We apply 5 methods to this problem, the PINN method may be applied to the scaled problem, to the primal problem with both original (labelled as $g$) and concavified utility (labelled as $\bar{g}$), and to the dual problem. We plot $X$ against $\Pi = \pi X$ and $v$ as they are given in (\ref{eq_output_pinn1}) and (\ref{eq_output_pinn2}). We also apply the SMP algorithm, plotting $X$ against $\Pi$ and $v$ as they are given in (\ref{eq_output_smp}).

Figure \ref{fig_results_comp_scale} shows accuracy of the control and value function for all methods. There is variance in the control function due to the dependence of $\pi$ on the second derivative $\partial_{xx}v$, which is often approximated with less accuracy than the value function its self. 
In terms of control structure, we see a bump between 0 and the reference $R$, which matches our intuition that the agent is more risky here to ensure the wealth moves to either $0$ or $\eta(R_T)$ by the terminal time, so the concave utility of wealth agrees with the non-concave utility. As time approaches the terminal time, the control in this region goes to infinity.
Tables \ref{tab_results_comp_scale_1} and \ref{tab_results_comp_scale_2} provide
the value function approximations for all our methods with different initial values. 

\begin{remark}  \label{rem_nonconcave}
The non-concave method  performs poorly when the state $x$ is much lower than the reference $r$. This is likely due to the jump between the concave and non-concave utilities leading to a discontinuity in time of the value function at terminal time. Indeed, in the deterministic reference case with complete market, by \cite[Theorem 3.9]{1} the value function $v(t, x, r)$ will converge to $\bar{U}(x, r)$ as $t \to T$. This leads to a lack of continuity in the value function (it is only a viscosity solution to the HJB with non-concave terminal condition). We therefore cannot expect the value function to be replicated uniformly by a neural network function that is continuous, and equal to $U$ at terminal time.  
\end{remark}

\begin{figure}[H] 
\centering
\includegraphics[width=0.8\textwidth]{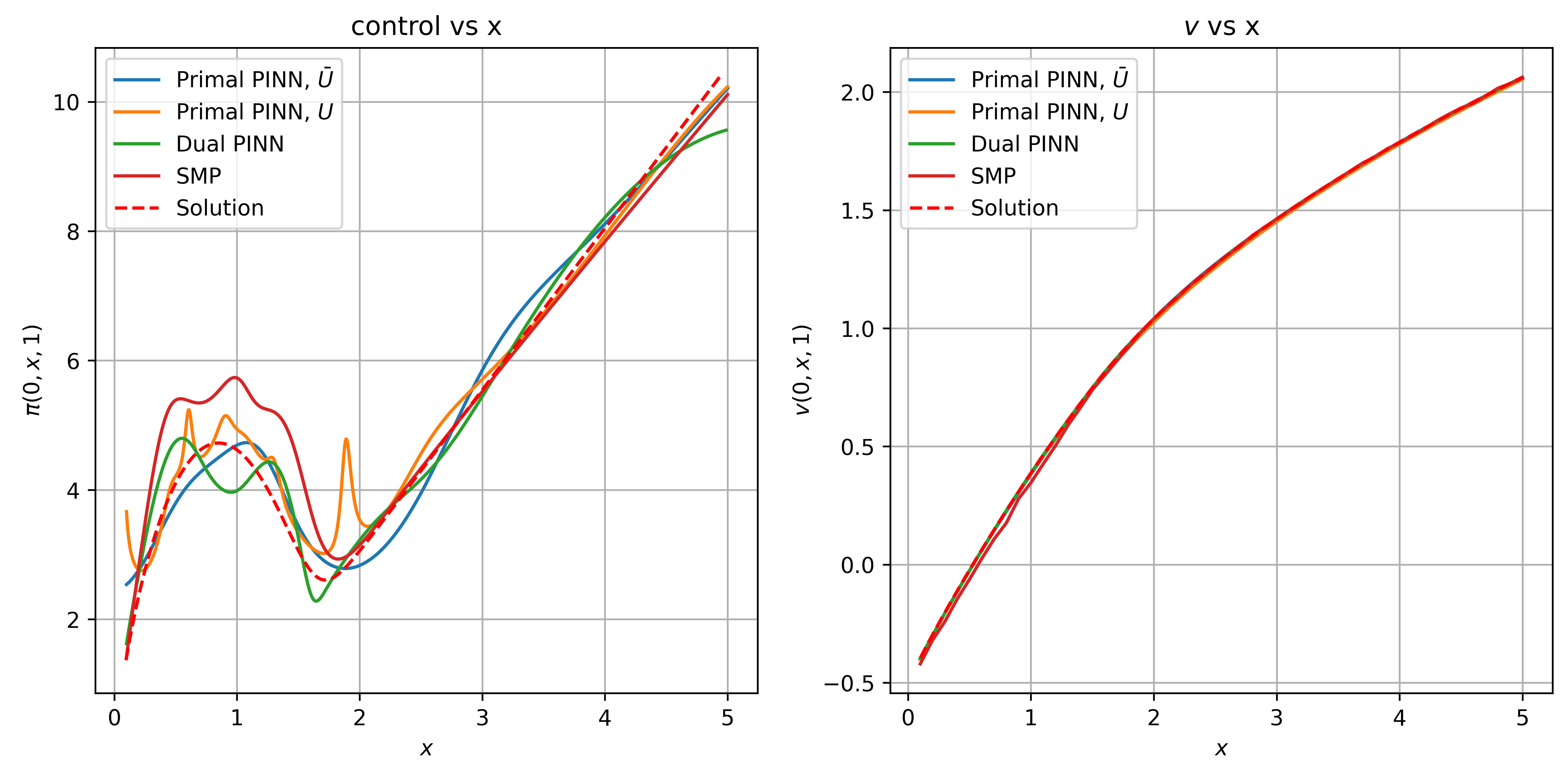}
\caption{The (amount invested) control and value functions $\Pi, v$ against state $x$, evaluated at $t = 0$ with complete market ($\rho = 1.0$), $r = 1.0$, and scalable utility. The red dotted line is the solution derived in Section \ref{sec_solution}.}
\label{fig_results_comp_scale}
\end{figure}
 

\begin{table}[H]
\centering
\begin{tabular}{|c|c|c|c|c|}\hline
(x, r) 	&	Primal	&	Primal Non-Conc	&	Dual	&	Solution	\\ \hline \hline 
(0.5, 1.0)	&	-0.0256	&	-0.0337	&	-0.0030	&	-0.0149	\\ \hline
(1.0, 1.0)	&	0.3844	&	0.3750	&	0.3940	&	0.3872	\\ \hline
(5.0, 1.0)	&	2.0618	&	2.0606	&	2.0610	&	2.0653	\\ \hline
(1.0, 0.5)	&	0.7316	&	0.7274	&	0.7246	&	0.7353	\\ \hline
(1.0, 5.0)	&	-0.6822	&	-0.6988	&	-0.6670	&	-0.6635	\\ \hline
\end{tabular}
\caption{The value functions $v$ against state $x$ and reference point $r$, evaluated at $t = 0$ with complete market ($\rho = 1.0$), and scalable utility. Results show the PINN-based algorithms that reduce dimensionality of the problem, solving the scaled problem with $z = \frac{x}{r}$, then rescaling.}
\label{tab_results_comp_scale_1}
\end{table}

\begin{table}[H]
\centering
\begin{tabular}{|c|c|c|c|c|c|}\hline
(x, r) 	&	\multicolumn{1}{|p{2cm}|}{\centering Primal \\ General} 	&	\multicolumn{1}{|p{2cm}|}{\centering Primal Non-Conc	 \\ General} &	\multicolumn{1}{|p{2cm}|}{\centering Dual \\ General} 	&	SMP	&	Solution	\\ \hline \hline 
(0.5, 1.0)	&	-0.0188	&	-0.0577	&	-0.0050	&	-0.0450	&	-0.0149	\\ \hline
(1.0, 1.0)	&	0.3793	&	0.3340	&	0.3866	&	0.3735	&	0.3872	\\ \hline
(5.0, 1.0)	&	2.0546	&	2.0631	&	2.0787	&	2.0678	&	2.0653	\\ \hline
(1.0, 0.5)	&	0.7335	&	0.7429	&	0.7320	&	0.7361	&	0.7353	\\ \hline
(1.0, 5.0)	&	-0.6758	&	-0.8197	&	-0.6551	&	-0.6885	&	-0.6635	\\ \hline    
\end{tabular}
\caption{The value functions $v$ against state $x$ and reference point $r$, evaluated at $t = 0$ with complete market ($\rho = 1.0$), and scalable utility. Results show the PINN and SMP algorithms that solve the original problem without scaling.}
\label{tab_results_comp_scale_2}
\end{table}

\subsection{Incomplete Market Problem}

For this example we take $\rho = 0$, $U_1(x) =  \sqrt{x}$, $U_2(x) = \half\sqrt{x}$. 
Figure \ref{fig_results_incomp_scale} shows accuracy of the control and value function for all methods. We do not have an analytical solution in this case to compare to. Compared to Figure \ref{fig_results_comp_scale}, we see little difference in the structure of the control, and concavity in the value function. 
The value function approximations for all our methods are provided in Tables \ref{tab_results_incomp_scale_1} and \ref{tab_results_incomp_scale_2}. We apply all the methods from the previous example, even though the SMP and concavificiation principles might not apply in the incomplete market case. However, the concave primal, and dual PINN methods appear to agree, suggesting the concavification principle still occurs. As the primal and dual values coincide, up to numerical errors, we conclude that the payoff $X_T = -\partial_y \tilde{U}(Y_T, R_T)$ has been replicated by the primal algorithm, where $Y_T$ is the optimal payoff for the dual problem. Given this, we conjecture that the concavification principle holds for this problem. These numerical results show robustness of our methods outside of problem classes that are well understood. 

\begin{figure}[H] 
\centering
\includegraphics[width=0.8\textwidth]{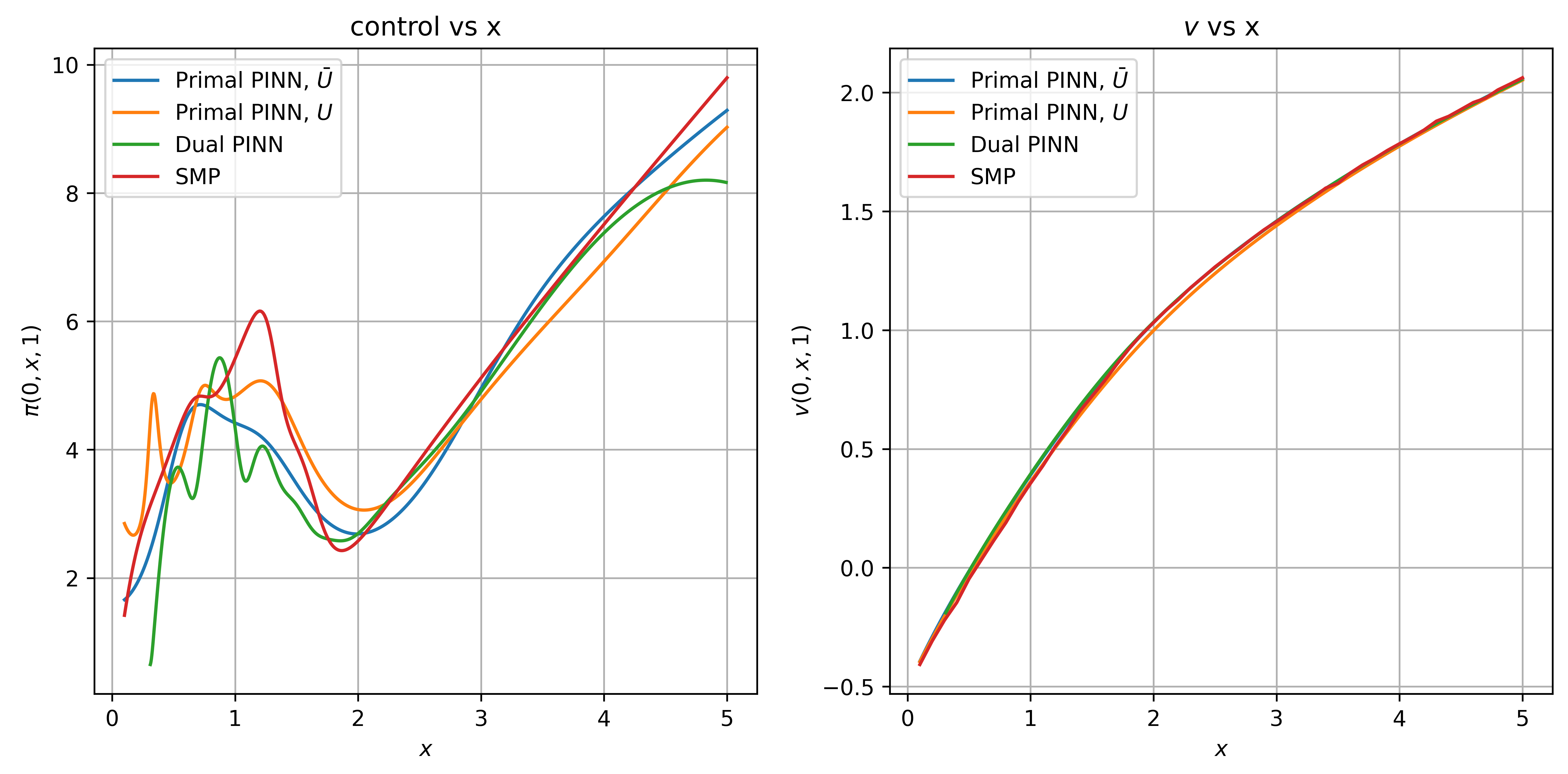}
\caption{The (amount invested) control and value functions $\Pi, v$ against state $x$, evaluated at $t = 0$ with incomplete market ($\rho = 0.0$), $r = 1.0$, and scalable utility.}
\label{fig_results_incomp_scale}
\end{figure}

\begin{table}[H]
\centering
\begin{tabular}{|c|c|c|c|}\hline
(x, r) 	&	Primal	&	Primal Non-Conc	&	Dual	\\ \hline \hline 
(0.5, 1.0)	&	-0.0144	&	-0.0391	&	0.0084	\\ \hline
(1.0, 1.0)	&	0.3931	&	0.3535	&	0.4045	\\ \hline
(5.0, 1.0)	&	2.0583	&	2.0521	&	2.0706	\\ \hline
(1.0, 0.5)	&	0.7288	&	0.7022	&	0.7211	\\ \hline
(1.0, 5.0)	&	-0.6650	&	-0.6888	&	-0.6162	\\ \hline  
\end{tabular}
\caption{The value functions $v$ against state $x$ and reference point $r$, evaluated at $t = 0$ with incomplete market ($\rho = 0.0$), and scalable utility. Results show the PINN-based algorithms that reduce dimensionality of the problem, solving the scaled problem with $z = \frac{x}{r}$, then rescaling.}
\label{tab_results_incomp_scale_1}
\end{table}

\begin{table}[H]
\centering
\begin{tabular}{|c|c|c|c|c|}\hline
(x, r) 	&	\multicolumn{1}{|p{2cm}|}{\centering Primal \\ General} 	&	\multicolumn{1}{|p{2cm}|}{\centering Primal Non-Conc	 \\ General} &	\multicolumn{1}{|p{2cm}|}{\centering Dual \\ General} 	&	SMP	\\ \hline \hline 
(0.5, 1.0)	&	-0.0066	&	-0.0635	&	0.0080	&	-0.0274	\\ \hline
(1.0, 1.0)	&	0.3895	&	0.3042	&	0.4087	&	0.3999	\\ \hline
(5.0, 1.0)	&	2.0517	&	2.0516	&	2.0610	&	2.0578	\\ \hline
(1.0, 0.5)	&	0.6930	&	0.6844	&	0.6987	&	0.7297	\\ \hline
(1.0, 5.0)	&	-0.6551	&	-0.7665	&	-0.6667	&	-0.6725	\\ \hline 
\end{tabular}
\caption{The value functions $v$ against state $x$ and reference point $r$, evaluated at $t = 0$ with incomplete market ($\rho = 0.0$), and scalable utility. Results show the PINN and SMP algorithms that solve the original problem without scaling.}
\label{tab_results_incomp_scale_2}
\end{table}

\subsection{General Utility}

For this example we take $\rho = 1$, $U_1(x) = \sqrt{x}$, $U_2(x) = \half\log(x + 1)$. 
The value function approximations for our methods are provided in Table \ref{tab_results_comp_general}. We cannot reduce this problem, so we apply the methods for solving general S-shaped utilities. In this example we compare the general PINN algorithms for the primal, primal concave, and dual problems, and the SMP. The results we see are familiar. In this setting the concavification principle holds, and we expect no duality gap. However, the non-concave PINN method does not perform well in the regions with $x < r$, see Remark \ref{rem_nonconcave}.

\begin{table}[H]
\centering
\begin{tabular}{|c|c|c|c|c|}\hline
(x, r) 	&	\multicolumn{1}{|p{2cm}|}{\centering Primal \\ General} 	&	\multicolumn{1}{|p{2cm}|}{\centering Primal Non-Conc	 \\ General} &	\multicolumn{1}{|p{2cm}|}{\centering Dual \\ General} 	&	SMP		\\ \hline \hline 
(0.5, 1.0)	&	0.0770	&	0.0327	&	0.0770	&	0.0496	\\ \hline
(1.0, 1.0)	&	0.4342	&	0.3638	&	0.4415	&	0.4230	\\ \hline
(5.0, 1.0)	&	2.0537	&	2.0654	&	2.0642	&	2.0613	\\ \hline
(1.0, 0.5)	&	0.7045	&	0.6184	&	0.7111	&	0.7274	\\ \hline
(1.0, 5.0)	&	-0.4916	&	-0.6596	&	-0.4742	&	-0.4829	\\ \hline
\end{tabular}
\caption{The value functions $v$ against state $x$ and reference point $r$, evaluated at $t = 0$ with complete market ($\rho = 1.0$), and general utility with no scaling property.}
\label{tab_results_comp_general}
\end{table}

\section{Conclusions} \label{sec_6}

In this paper we study the S-shaped utility maximisation problem, with a random benchmark that may not be replicable due to incompleteness. We introduce the deep neural network method for solving the HJB equation and the adjoint BSDE with the optimality condition from the SMP. We present numerical results for different solution methods (HJB equations for primal, concavified primal, dual value functions, SMP for adjoint BSDE, reduced and full problems) and  show that the DNN method is viable and robust in solving  nonlinear PDEs and BSDEs.  This paper has not addressed  
theoretical  convergence of DNN nor general concavification principle. Convergence analysis for the PINN method in particular is not available beyond linear PDEs, and our HJB equations are fully nonlinear PDEs. We leave these to future research.

\vskip 6mm
\noindent{\bf Acknowledgements}

\noindent
 The authors are  grateful to  the anonymous reviewer  whose comments have helped to improve the paper of the previous version.   Support by the Engineering and Physical Sciences Research Council (UK)  Grant (EP/V008331/1) is gratefully acknowledged.

\end{document}